\DeclareSymbolFont{symbolsC}{U}{pxsyc}{m}{n}
\DeclareMathSymbol{\medcirc}{\mathbin}{symbolsC}{7}
\algnewcommand\algorithmicinput{\textbf{Input:}}
\algnewcommand\Input{\item[\algorithmicinput]}
\algnewcommand\algorithmicoutput{\textbf{Output:}}
\algnewcommand\Output{\item[\algorithmicoutput]}
\newcommand{\R}{{\mathbb{R}}}
\newcommand{\N}{{\mathbb{N}}}
\newtheorem{assumption}{Assumption}%
\begin{document}
	\title{Temporal Logic Verification of Stochastic Systems Using Barrier Certificates\thanks{This work was supported in part by the German Research Foundation (DFG) through the grant ZA 873/1-1 and the TUM International Graduate School of Science and Engineering (IGSSE).}}
	\author{
		Pushpak Jagtap\inst{1}
		\and
		Sadegh Soudjani\inst{2}
		\and
		Majid Zamani\inst{1}
	}
		\institute{Technical University of Munich, Germany\\
		\email{\{pushpak.jagtap,zamani\}@tum.de} 
		\and
		 Newcastle University, United Kingdom\\
		 \email{Sadegh.Soudjani@ncl.ac.uk} }
	\maketitle
	\begin{abstract}
	This paper presents a methodology for temporal logic verification of discrete-time stochastic systems. Our goal is to find a lower bound on the probability that a complex temporal property is satisfied by finite traces of the system. Desired temporal properties of the system are expressed using a fragment of linear temporal logic, called \emph{safe LTL over finite traces}. 
We propose to use barrier certificates for computations of such lower bounds, which is computationally much more efficient than the existing discretization-based approaches. The new approach is discretization-free and does not suffer from the curse of dimensionality caused by discretizing state sets.
The proposed approach relies on decomposing the negation of the specification into a union of sequential reachabilities and then using barrier certificates to compute upper bounds for these reachability probabilities.
We demonstrate the effectiveness of the proposed approach on case studies with linear and polynomial dynamics.
	\end{abstract}

	\section{Introduction}\label{sect_intro}
	Verification of dynamical systems against complex specifications has gained significant attention in last few decades \cite{baier2008principles,tabuada2009verification}.
	The verification task is challenging for continuous-space dynamical systems under uncertainties and is hard to be performed exactly.
	%
	There have been several results in the literature utilizing approximate finite models (a.k.a. abstractions) for verification of stochastic dynamical systems.
	Examples include results on verification of discrete-time stochastic hybrid systems against probabilistic invariance \cite{esmaeil2013adaptive,SAprecise14} and linear temporal logic specifications  \cite{abate2011quantitative,tkachev2013formula} using Markov chain abstractions.
	Verification of discrete-time stochastic switched systems against probabilistic computational tree logic formulae is discussed in \cite{7029024} using interval Markov chains as abstract models.
	However, these abstraction techniques are based on state set discretization and face the issue of discrete state explosion. This scalability issue is only partly mitigated in \cite{SAM15,LSZ17} based on compositional abstraction of stochastic systems.
	
	On the other hand, a discretization-free approach, based on \emph{barrier certificates}, has been used for verifying stochastic systems against simple temporal properties such as safety and reachability.
	Employing barrier certificates for safety verification of stochastic systems is initially proposed in \cite{4287147}. Similar results are reported in \cite{8263999} for switched diffusion processes and piecewise-deterministic Markov processes. The results in \cite{huang2017probabilistic} propose a probabilistic barrier certificate to compute bounds on the probability that a stochastic hybrid system reaches unsafe region. However, in order to provide infinite time horizon guarantees, all of these results require an assumption that the barrier certificates exhibit \emph{supermartingale} property which in turns presuppose stochastic stability and vanishing noise at the equilibrium point of the system.
	
	In this work, we consider the problem of verifying discrete-time stochastic systems against complex specifications over finite time horizons \emph{without} requiring any assumption on the stability of the system. This is achieved by relaxing supermartingale condition to \emph{$c$-martingale} as also utilized in \cite{steinhardt2012finite}. Correspondingly, instead of infinite-horizon specifications, we consider finite-horizon temporal specifications, which are more practical in the real life applications including motion planning problems \cite{6942758,7158879,6224963}.
	In spirit, this work extends the idea of combining automata representation of the specification and barrier certificates, which is proposed in \cite{7364197} for non-stochastic dynamics, in order to verify stochastic systems against specifications expressed as a fragment of LTL formulae, namely, safe LTL on finite traces. Our work also has the same flavour as \cite{DM14}, but in a completely different setting, in combining barrier certificates to guarantee satisfaction of temporal specifications.
	
	To the best of our knowledge, this paper is the first one to use barrier certificates for algorithmic verification of stochastic systems against a wide class of temporal properties. Our main contribution is to provide a systematic approach for computing lower bounds on the probability that the discrete-time stochastic system satisfies given safe LTL specification over a finite time horizon. This is achieved by first decomposing specification into a sequence of simpler verification tasks based on the structure of the automaton associated with the negation of the specification. Next, we use barrier certificates for computing probability bounds for simpler verification tasks which are further combined to get a (potentially conservative) lower bound on the probability of satisfying the original specification. The effectiveness of the proposed approach is demonstrated using several case studies with linear and polynomial dynamics.

\section{Preliminaries}\label{II}
\subsection{Notations}
We denote the set of nonnegative integers by $\N_0 := \{0, 1, 2, \ldots\}$ and the set of positive integers by $\N := \{1, 2, 3, \ldots \}$. The symbols $ \R$, $\R^+,$ and $\R_0^+ $ denote the set of real, positive, and nonnegative real numbers, respectively. We use $ \mathbb{R}^{n\times m} $ to denote the space of real matrices with $ n $ rows and $ m $ columns. \\
We consider a probability space $ (\Omega,\mathcal{F}_\Omega,\mathbb{P}_\Omega) $ where $\Omega$ is the sample space, $\mathcal{F}_\Omega$ is a sigma-algebra on $\Omega$ comprising the subset of $\Omega$ as events, and $\mathbb{P}_\Omega$ is a probability measure that assigns probabilities to events. We assume that random variables introduced in this article are measurable functions of the form $X:(\Omega,\mathcal{F}_\Omega)\rightarrow(S_X,\mathcal{F}_X)$ as $Prob\{A\}=\mathbb{P}_\Omega\{X^{-1}(A)\}$ for any $A\in\mathcal{F}_X$. We often directly discuss the probability measure on $(S_X,\mathcal{F}_X)$ without explicitly mentioning the underlying probability space and the function $X$ itself.


\subsection{Discrete-time stochastic systems}
In this work, we consider discrete-time stochastic systems given by a tuple $S=(X,V_{\textsf w},w,f)$, where $X$ and $V_{\textsf w}$ are Borel spaces representing state and uncertainty spaces of the system. We denote by $(X,\mathcal{B}(X))$ the measurable space with $\mathcal{B}(X)$ being the Borel sigma-algebra on the state space. Notation $w$ denotes a sequence of independent and identically distributed (i.i.d.) random variables on the set $V_{\textsf w}$ as $w:=\{w(k):\Omega\rightarrow V_{\textsf w}, \ k\in\mathbb N_0\}$.  
The map $f : X \times V_{\textsf w} \rightarrow X$ is a measurable function characterizing the state evolution of the system. For a given initial state $x(0)\in X$, the state evolution can be written as
\begin{equation}
 x(k+1)=f(x(k),w(k)), \ \ \ k\in\mathbb{N}_0.
\label{sys}
\end{equation}
We denote the \emph{solution process} generated over $N$ time steps by $\textbf{x}_N=x(0),$ $x(1), \ldots$, $x(N-1)$.
The sequence $w$ together with the measurable function $f$ induce a unique probability measure on the sequences $\textbf{x}_N$. 

We are interested in computing a lower bound on the probability that system $S=(X,V_{\textsf w},w,f)$ satisfies a specification expressed as a temporal logic property.  We provide syntax and semantics of the class of specifications dealt with in this paper in the next subsection.

\subsection{Linear temporal logic over finite traces}
\label{sec:LTL}
In this subsection, we introduce linear temporal logic over finite traces, referred to as LTL$_F$ \cite{de2013linear}. LTL$_F$ uses the same syntax of LTL over infinite traces given in \cite{baier2008principles}. The LTL$_F$ formulas over a set $ \Pi $ of atomic propositions are obtained as follows:
\begin{equation}
 \varphi ::=  \text{ true} \mid p \mid \neg \varphi \mid \varphi_1 \wedge \varphi_2 \mid \varphi_1 \vee\varphi_2\mid\medcirc \varphi  \mid  \lozenge\varphi \mid \square\varphi \mid \varphi_1\mathcal{U}\varphi_2, \nonumber
 \end{equation}
 where $p \in \Pi$, $\medcirc $ is the next operator, $\lozenge$ is eventually, $\square$ is always, and $\mathcal{U}$ is until. The semantics of LTL$_F$ is given in terms of \textit{finite traces}, \textit{i.e.}, finite words $\sigma$, denoting a finite non-empty sequence of consecutive steps over $\Pi$. We use $|\sigma |$ to represent the length of $\sigma$ and $\sigma_i$ as a propositional interpretation at $i$th position in the trace, where $0\leq i < |\sigma |$. Given a finite trace $\sigma$ and an LTL$_F$ formula $\varphi$, we inductively define when an LTL$_F$ formula $\varphi$ is true at the $i$th step $(0\leq i <|\sigma |)$, denoted by $\sigma,i\models\varphi$, as follows:
 \begin{itemize}
 \item $\sigma,i\models \text{true}$;
  \item $\sigma,i\models p$, for $p\in\Pi$ iff $p\in\sigma_i$;
   \item $\sigma,i\models \neg\varphi$ iff $\sigma,i\not\models\varphi$;
    \item $\sigma,i\models \varphi_1\wedge\varphi_2$ iff $\sigma,i\models\varphi_1$ and $\sigma,i\models\varphi_2$;
      \item $\sigma,i\models \varphi_1\vee\varphi_2$ iff $\sigma,i\models\varphi_1$ or $\sigma,i\models\varphi_2$;  
     \item $\sigma,i\models \medcirc \varphi$ iff $i<|\sigma |-1$ and $\sigma,i+1\models\varphi$; 
     \item $\sigma,i\models  \lozenge\varphi$ iff for some $j$ such that $i\leq j< |\sigma |$, we have $\sigma,j\models\varphi$; 
     \item $\sigma,i\models  \square\varphi$ iff for all $j$ such that $i\leq j<|\sigma |$, we have $\sigma,j\models\varphi$; 
     \item $\sigma,i\models  \varphi_1\mathcal{U}\varphi_2$ iff for some $j$ such that $i\leq j< |\sigma |$, we have $\sigma,j\models\varphi_2$, and for all $k$ s.t. $i\leq k<j$, we have $\sigma,k\models\varphi_1$. 
 \end{itemize}
 The formula $\varphi$ is true on $\sigma$, denoted by $\sigma\models\varphi$, if and only if $\sigma,0\models\varphi$. We denote the language of such finite traces associated with LTL$_F$ formula $\varphi$ by $\mathcal{L}(\varphi)$.  Notice that in this case we also have the usual boolean equivalences such as $\varphi_1\vee\varphi_2\equiv \neg(\neg\varphi_1\wedge\neg\varphi_2)$, 
 $\varphi_1\implies\varphi_2\equiv \neg\varphi_1\vee\varphi_2$, $\lozenge\varphi \equiv \text{true } \mathcal{U} \varphi$, and $\square\varphi\equiv\neg\lozenge\neg\varphi$.
 
 In this paper, we consider only safety properties \cite{kupferman1999model}. Hence, we use a subset of LTL$_F$ called safe LTL$_F$ as introduced in \cite{6942758} and defined as follows.
 \begin{definition}\label{safe_LTL_f}
 An LTL$_F$ formula is called a safe LTL$_F$ formula if it can be represented in positive normal form, \textit{i.e.}, negations only occur adjacent to atomic propositions, using the temporal operators next $(\medcirc)$ and always $(\square)$.
 \end{definition}
 
Next, we define deterministic finite automata which later serve as equivalent representations of LTL$_F$ formulae.
\begin{definition}\label{DFA}
A deterministic finite automaton $($DFA$)$ is a tuple $\mathcal{A}=(Q,Q_0,\Sigma,$ $\delta,F)$, where $Q$ is a finite set of states, $Q_0\subseteq Q$ is a set of initial states, $\Sigma$ is a finite set $($a.k.a. alphabet$)$, $\delta: Q\times\Sigma\rightarrow Q$ is a transition function, and $F\subseteq Q$ is a set of accepting states.
\end{definition}
We use notation $q\overset{\sigma}{\longrightarrow} q'$ to denote transition relation $(q,\sigma,q')\in\delta$.
A finite word $\sigma=(\sigma_0,\sigma_1,\ldots,\sigma_{n-1})\in \Sigma^n$ is accepted by a DFA $\mathcal{A}$ if there exists a finite state run $q=(q_0,q_1,\ldots,q_{n})\in Q^{n+1}$ such that $q_0\in Q_0$, $q_i \overset{\sigma_i}{\longrightarrow} q_{i+1}$ for all $0\leq i< n$ and $q_{n}\in F$. The accepted language of $\mathcal{A}$, denoted by $\mathcal{L}(\mathcal{A})$, is the set of all words accepted by $\mathcal{A}$.\\
According to \cite{de2015synthesis}, every LTL$_F$ formula $\varphi$
can be translated to a DFA $\mathcal{A}_\varphi$
that accepts the same language as $\varphi$, \textit{i.e.}, $\mathcal{L}(\varphi)=\mathcal{L}(A_\varphi)$. Such $\mathcal{A}_\varphi$ can be constructed explicitly or symbolically using existing tools, such as SPOT \cite{duret2016spot} and MONA \cite{henriksen1995mona}.

\begin{remark}
For a given LTL$_F$ formula $\varphi$ over atomic propositions $\Pi$, the associated DFA $\mathcal A_\varphi$ is usually constructed over the alphabet $\Sigma = 2^\Pi$.
Solution process of a system $S$ is also connected to the set of words by a labeling function $L$ from the state space to the alphabet $\Sigma$. Without loss of generality, we work with the set of atomic propositions directly as the alphabet rather than its power set.
\end{remark}

\medskip
\noindent
\textbf {Property satisfaction by the solution process.}
For a given discrete-time stochastic system $S = (X,V_{\textsf w}, w,f)$ with dynamics \eqref{sys}, finite-time solution processes $\textbf{x}_N$ are connected to LTL$_F$ formulae with the help of a measurable labeling function $L: X \rightarrow \Pi$, where $\Pi$ is the set of atomic propositions.
%
\begin{definition}\label{sys_trace}
For a stochastic system $S = (X,V_{\textsf w},w,f)$ and labeling function $L: X \rightarrow \Pi$, a finite sequence $\sigma_{\textbf{x}_N}=(\sigma_0,\sigma_1,\ldots,\sigma_{N-1})\in\Pi^N$
is a finite trace of the solution process $\textbf{x}_N=x(0)$, $x(1)$,$\ldots$, $x(N-1)$ of $S$ if we have $\sigma_k=L(x(k))$ for all $k \in\{0,1,\ldots,N-1\}$.
 \end{definition}
 Next, we define the probability that the discrete-time stochastic system $S$ satisfies safe LTL$_F$ formula $\varphi$ over traces of length $|\sigma |=N$.
 \begin{definition}
 Let $Trace_{ N}(S)$ be the set of all finite traces of solution processes of $S$
 with length $|\sigma_{\textbf{x}_N}|= N$ and $\varphi$ be a safe LTL$_F$ formula over $\Pi$. Then $\mathbb{P}\{Trace_{ N}(S) \models \varphi\}$ is the probability that $\varphi$ is satisfied by discrete-time stochastic system $S$ over a finite time horizon $[0,N)\subset\N_0$.
 \end{definition}
 
\begin{remark}
The set of atomic propositions $\Pi=\{p_0,p_1,\ldots,p_M\}$ and the labeling function $L: X \rightarrow \Pi$ provide a measurable partition of the state space $X = \cup_{i=1}^M X_i$ as  $X_i:=L^{-1}(p_i)$. Without loss of generality, we assumed that $X_i\neq \emptyset$ for any $i$.
\end{remark}


\subsection{Problem formulation}
\begin{problem}\label{prob}
Given a system $S = (X,V_{\textsf w}, w, f)$ with dynamics \eqref{sys}, a safe LTL$_F$ specification $\varphi$ of length $N$ over a set $\Pi=\{p_0,p_1,\ldots,p_M\}$ of atomic propositions,
and a labeling function $L: X \rightarrow \Pi$,
 compute a lower bound on the probability that the traces of solution process of $S$ of length $N$ satisfies $\varphi$, \textit{i.e.}, a quantity $\vartheta$ such that $\mathbb{P}\{Trace_{ N}(S) \models \varphi\}\ge \vartheta$.
\end{problem}
Note that $\vartheta = 0$ is a trivial lower bound, but we are looking at computation of lower bounds that are as tight as possible.
For finding a solution to Problem \ref{prob}, we first compute an upper bound on the probability $\mathbb{P}\{Trace_{N}(S)\models\neg\varphi\}$.
This is done by constructing a DFA $\mathcal{A}_{\neg \varphi}=(Q,Q_0,\Pi,\delta,F)$ that accepts all finite words over $\Pi$ that satisfies $\neg\varphi$. 


\def\example{\par\noindent{\bf Example 1.} \ignorespaces}
\def\endexample{}

\begin{example}
\begin{figure}[t]
			\centering
			\subfigure[]{\includegraphics[scale=0.6, height = 4.4cm]{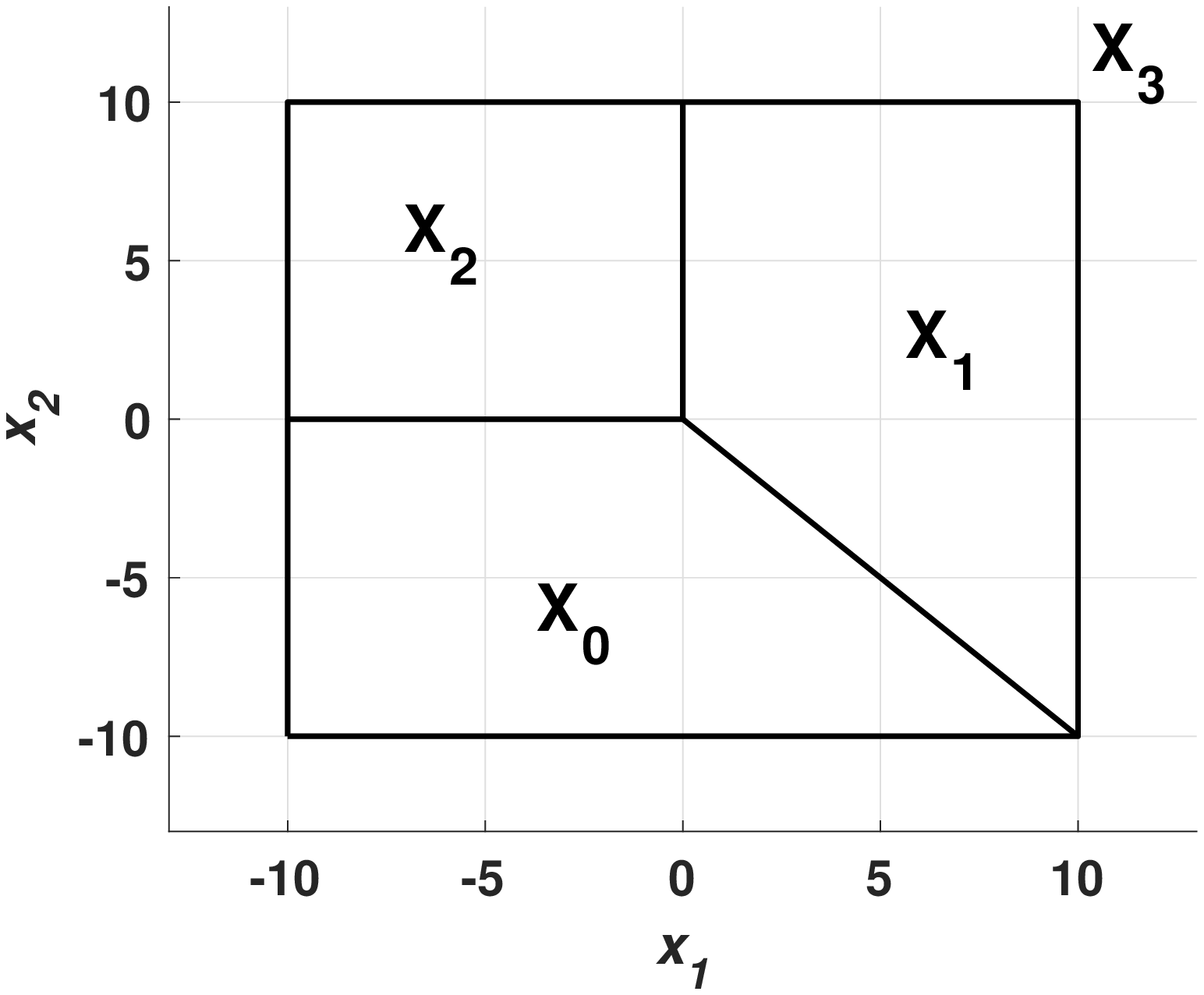}}  \hspace{.5em}
			\subfigure[]{\includegraphics[scale=0.5, height = 4.4cm]{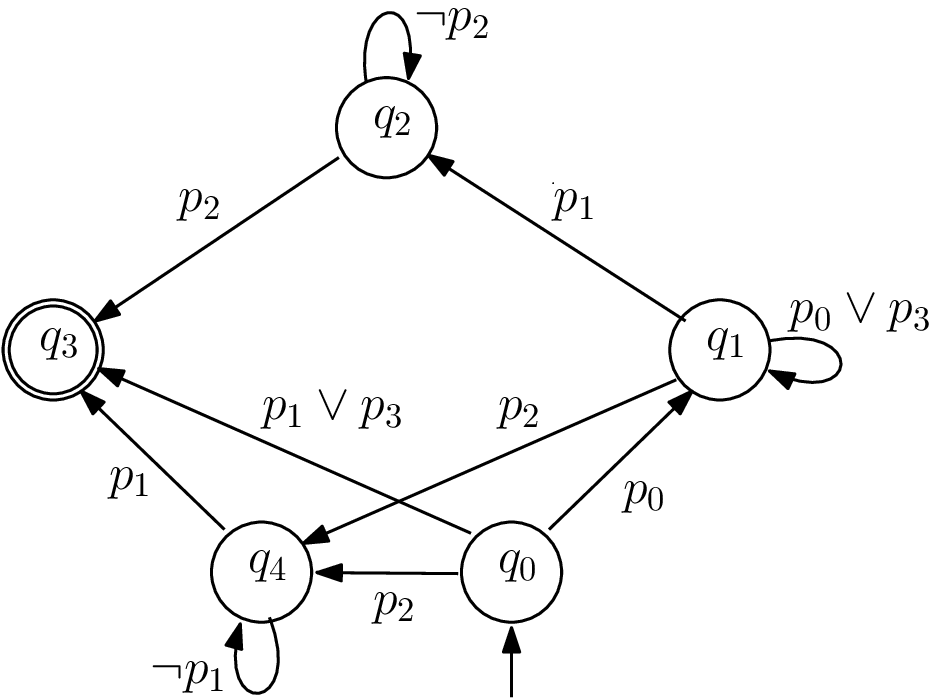}}\\
			\caption{(a) State space and regions of interest for Example~1, (b) DFA $\mathcal{A}_{\neg \varphi}$ that accepts all traces satisfying $\neg \varphi$ where $\varphi$ is given in \eqref{LTL_f}.}
\end{figure}
Consider a two-dimensional stochastic system $S = (X,V_{\textsf w}, w,f)$ with $X = V_{\textsf w} = \mathbb R^2$
and dynamics
\begin{align}
x_1(k+1)&=x_1(k)-0.01 x_2^2(k)+0.1 w_1(k),\nonumber\\
x_2(k+1)&=x_2(k)-0.01 x_1(k) x_2(k)+0.1 w_2(k),
\end{align}
where $w_1(\cdot)$, $w_2(\cdot)$ are independent standard normal random variables.
Let the regions of interest be given as 
\begin{align*}
X_0&=\{(x_1,x_2)\in X \mid x_1\geq-10, \ -10\leq x_2\leq 0, \text{ and }x_1+x_2\leq 0\},\\
X_1&=\{(x_1,x_2)\in X \mid 0\leq x_1\leq 10, \ x_2\leq 10, \text{ and }x_1+x_2\geq 0 \},\\
X_2&=\{(x_1,x_2)\in X \mid -10 \leq x_1\leq 0 \text{ and } 0\leq x_2\leq 10\}, \text{ and }\\
X_3& = X\setminus (X_0\cup X_1\cup X_2).
\end{align*}
The sets $X_0$, $X_1$, $X_2$, and $X_3$ are shown in Figure 1(a).
%
The set of atomic propositions is given by $\Pi=\{p_0,p_1,p_2,p_3\}$, with labeling function $L(x) = p_i$ for any $x\in X_i$, $i\in\{0,1,2,3\}$. We are interested in computing a lower bound on the probability that $Trace_{N}(S)$ of length $N$ satisfies the following specification: 
\begin{itemize}
\item Solution process should start in either $X_0$ or $X_2$. If it starts in $X_0$, it will always stay away from $X_1$ or always stay away from $X_2$. If it starts in $X_2$, it will always stay away from $X_1$ within time horizon $[0,N)\subset\mathbb{N}_0$. 
\end{itemize}
This property can be expressed by the safe LTL$_F$ formula
\begin{equation}
\label{LTL_f}
\varphi=(p_0\wedge(\square \neg p_1 \vee \square\neg p_2))\vee (p_2\wedge\square\neg p_1).
\end{equation}
The DFA corresponding to the negation of the safe LTL$_F$ formula $\varphi$ in \eqref{LTL_f} is shown in Figure 1(b).
\qed
\end{example}
%
%
Next, we provide a systematic approach to solve Problem \ref{prob} by combining automata and barrier certificates introduced in the next section.
We introduce the notion of barrier certificate similar to the one used in \cite{4287147} and show how to use it for solving Problem~\ref{prob} in Sections~\ref{runs}-\ref{BC_computation}.

\section{Barrier Certificate}
\label{bar}
We recall that a function $B:X\rightarrow\mathbb{R}$ is a \textit{supermartingale} for system $S=(X,V_{\textsf w},w,f)$ if
$$\mathbb{E}[B(x(k+1))\mid x(k)]\leq B(x(k)),\quad \forall x(k)\in X,\,k\in\mathbb N_0,$$
where the expectation is with respect to $w(k)$.
This inequality requires that the expected value of $B(x(\cdot))$ does not increases as a function of time.
To provide results for finite time horizon, we instead use a relaxation of supermartingale condition called \textit{c-martingale}.
\begin{definition}
Function $B:X\rightarrow\mathbb{R}$ is a \textit{c-martingale} for system $S=(X,V_{\textsf w},w,f)$ if it satisfies
$$\mathbb{E}[B(x(k+1))\mid x(k)]\leq B(x(k))+c,\quad \forall x(k)\in X,\,k\in\mathbb N_0,$$
with $c\ge 0$ being a non-negative constant.
\end{definition}
We provide the following lemma and use it in the sequel. This lemma is a direct consequence of \cite[Theorem 1]{kushner1965stability} and is also utilized in \cite[Theorem II.1]{steinhardt2012finite}.
\begin{lemma}\label{lemma_1}
Let $B: X\rightarrow \mathbb R_0^{+}$ be a non-negative c-martingale for system $S$. Then for any constant $\lambda>0$ and any initial condition $x_0\in X$, 
\begin{equation}
\mathbb{P}\{\sup_{0\leq k \leq T_d}B(x(k))\geq\lambda\mid x(0)=x_0\}\leq\frac{B(x_0)+cT_d}{\lambda}.
\label{eq_lem1}
\end{equation}
\end{lemma}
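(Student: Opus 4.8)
The plan is to recognize this as a discrete-time martingale-type concentration inequality and prove it via a supermartingale transformation combined with Doob's maximal inequality (equivalently, the optional stopping / Ville's inequality approach). The key observation is that the $c$-martingale condition $\mathbb{E}[B(x(k+1))\mid x(k)]\leq B(x(k))+c$ does not directly give a supermartingale, but the shifted process $Y(k):=B(x(k))-ck$ does satisfy $\mathbb{E}[Y(k+1)\mid x(k)]\leq Y(k)$, i.e. $\{Y(k)\}$ is a genuine supermartingale with respect to the natural filtration. Since $B$ is non-negative, I can work with this corrected process to control the running maximum.

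First I would fix the initial condition $x(0)=x_0$ and the horizon $T_d$, and introduce the stopping time $\tau:=\inf\{k\geq 0 : B(x(k))\geq\lambda\}$, the first time the barrier reaches the level $\lambda$. The event $\{\sup_{0\leq k\leq T_d}B(x(k))\geq\lambda\}$ equals $\{\tau\leq T_d\}$. On this event, at the (random, bounded) time $\min(\tau,T_d)$ the barrier has value at least $\lambda$. The next step is to apply the optional stopping theorem to the supermartingale $Y(k)=B(x(k))-ck$ at the bounded stopping time $\min(\tau,T_d)$, yielding
\begin{equation}
\mathbb{E}[B(x(\min(\tau,T_d)))-c\min(\tau,T_d)\mid x(0)=x_0]\leq B(x_0).
\nonumber
\end{equation}
Rearranging and using $\min(\tau,T_d)\leq T_d$ together with $c\geq 0$ gives $\mathbb{E}[B(x(\min(\tau,T_d)))\mid x_0]\leq B(x_0)+cT_d$.

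Now I would bound the left-hand side from below by restricting attention to the event $\{\tau\leq T_d\}$: since $B$ is non-negative and on this event $B(x(\tau))\geq\lambda$, I have $\mathbb{E}[B(x(\min(\tau,T_d)))\mid x_0]\geq\lambda\,\mathbb{P}\{\tau\leq T_d\mid x_0\}$. Combining the two inequalities yields $\lambda\,\mathbb{P}\{\tau\leq T_d\mid x_0\}\leq B(x_0)+cT_d$, which is precisely \eqref{eq_lem1} after dividing by $\lambda>0$ and identifying $\{\tau\leq T_d\}$ with the event in the statement.

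The main obstacle, and the place where I would be most careful, is the rigorous justification of the optional stopping step: one must confirm that $\min(\tau,T_d)$ is a bounded stopping time (it is, being bounded by $T_d$) and that the supermartingale property is preserved under stopping, which requires only the one-step conditional bound and the i.i.d.\ structure of $w$ making $\{x(k)\}$ a Markov chain. Since the statement cites this as a direct consequence of \cite[Theorem 1]{kushner1965stability}, an alternative cleaner route is to invoke that result as a black box; however, the self-contained supermartingale-plus-optional-stopping argument above is short and avoids needing the exact hypotheses of the cited theorem. A subtle point worth flagging is that non-negativity of $B$ is used twice—once to ensure the lower bound $B(x(\min(\tau,T_d)))\geq 0$ off the event and once to guarantee $\lambda>0$ is meaningful as a threshold—so I would state explicitly where the hypothesis $B:X\rightarrow\mathbb{R}_0^+$ enters.
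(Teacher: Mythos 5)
Your proof is correct, but it is not the route the paper takes: the paper gives no proof of this lemma at all, simply declaring it ``a direct consequence of \cite[Theorem 1]{kushner1965stability}'' and noting it is also used as \cite[Theorem II.1]{steinhardt2012finite}. Your argument is the self-contained version of what lies inside those citations: the drift correction $Y(k)=B(x(k))-ck$, which turns the $c$-martingale into a genuine supermartingale, followed by optional stopping at the bounded stopping time $\min(\tau,T_d)$ and the lower bound $\mathbb{E}[B(x(\min(\tau,T_d)))\mid x(0)=x_0]\geq\lambda\,\mathbb{P}\{\tau\leq T_d\mid x(0)=x_0\}$ on the exceedance event. What your version buys is transparency: it exhibits exactly where $c\geq 0$, the non-negativity of $B$, and the Markov structure induced by the i.i.d.\ noise enter, and why the bound degrades additively by $cT_d$; what the paper's citation buys is brevity. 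Two points to tighten if you keep the self-contained argument: (i) optional stopping needs integrability of $B(x(k))$ for $k\leq T_d$, which is not an explicit hypothesis but follows by induction from the $c$-martingale inequality and the fixed initial condition, giving $\mathbb{E}[B(x(k))\mid x(0)=x_0]\leq B(x_0)+ck<\infty$ --- this deserves a sentence; (ii) non-negativity of $B$ is in fact used exactly once, to bound the stopped value below by $0$ off the event $\{\tau\leq T_d\}$; your claimed ``second use'' (that $\lambda>0$ be a meaningful threshold) is not a use of that hypothesis, so the remark should be dropped or corrected.
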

%
Next theorem provides inequalities on a barrier certificate that gives an upper bound on reachability probabilities. This theorem is inspired by the result of \cite[Theorem 15]{4287147} that uses supermartingales for reachability analysis of continuous-time systems.

\begin{theorem}
\label{barrier}
Consider a discrete-time stochastic system $S=(X,V_{\textsf w},w,f)$ and sets $X_0, X_1\subseteq X$. Suppose there exist a non-negative function $B:X\rightarrow\mathbb{R}_0^+$ and constants $c\ge 0$ and $\gamma\in[0,1]$ such that
\begin{align}
&B(x)\leq\gamma &\forall x\in X_0,\label{ine_1}\\
&B(x)\geq1 &\forall x\in X_1,\label{ine_2}\\
&B(x) \text{ is c-martingale} &\forall x\in X.\label{ine_3}
\end{align}
Then the probability that the solution process $\textbf{x}_{T_d}$ of $S$ starts from initial state $x(0)\in X_0$ and reaches $X_1$ within time horizon $[0,T_d]\subset \N_0$ is upper bounded by $\gamma+cT_d$.
\end{theorem}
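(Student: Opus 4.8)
The plan is to reduce the reachability statement to the concentration bound already recorded in Lemma~\ref{lemma_1}, applied with the threshold $\lambda=1$. The guiding observation is that the three hypotheses \eqref{ine_1}--\eqref{ine_3} are tailored so that the superlevel set $\{B\ge 1\}$ contains the target set $X_1$, while the initial set $X_0$ sits inside the low sublevel set $\{B\le\gamma\}$; since $B$ is a nonnegative c-martingale it is unlikely to climb from a small value up to $1$ over the finite horizon $[0,T_d]$.

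First I would fix an initial condition $x(0)=x_0\in X_0$ and consider the event that the solution process $\textbf{x}_{T_d}$ enters $X_1$ at some time $k\in[0,T_d]$. If this event occurs, then there is a time $k$ with $x(k)\in X_1$, and \eqref{ine_2} forces $B(x(k))\ge 1$; consequently $\sup_{0\le k\le T_d}B(x(k))\ge 1$. This yields the crucial event inclusion
\begin{equation*}
\{\textbf{x}_{T_d}\text{ reaches }X_1\text{ in }[0,T_d]\}\subseteq\Big\{\sup_{0\le k\le T_d}B(x(k))\ge 1\Big\},
\end{equation*}
so that by monotonicity of probability the reachability probability is bounded above by the probability of the supremum event. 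This inclusion, justified by the nonnegativity of $B$ together with \eqref{ine_2}, is the only nontrivial conceptual step.

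Next I would invoke Lemma~\ref{lemma_1} with $\lambda=1$, which is legitimate because $B$ is a nonnegative c-martingale on all of $X$ by \eqref{ine_3}. This gives
\begin{equation*}
\mathbb{P}\Big\{\sup_{0\le k\le T_d}B(x(k))\ge 1\mid x(0)=x_0\Big\}\le B(x_0)+cT_d.
\end{equation*}
Since $x_0\in X_0$, the initial-set inequality \eqref{ine_1} gives $B(x_0)\le\gamma$, and chaining the two bounds produces the claimed estimate $\gamma+cT_d$, uniformly over all initial states in $X_0$.

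I do not anticipate a genuine obstacle, as the theorem is essentially a repackaging of Lemma~\ref{lemma_1}: the substantive content has been absorbed into that lemma (itself a consequence of \cite{kushner1965stability}). The only points requiring mild care are verifying that the target-reaching event is measurable and is indeed contained in the supremum event, and noting that the conclusion remains valid without assuming $\gamma+cT_d\le 1$, since in the degenerate case $\gamma+cT_d>1$ the asserted upper bound on a probability holds trivially.
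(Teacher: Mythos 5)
Your proposal is correct and follows essentially the same route as the paper's own proof: the event inclusion $\{\textbf{x}_{T_d}\text{ reaches }X_1\}\subseteq\{\sup_{0\le k\le T_d}B(x(k))\ge 1\}$ via \eqref{ine_2}, then Lemma~\ref{lemma_1} with $\lambda=1$ (valid by \eqref{ine_3}), then $B(x_0)\le\gamma$ from \eqref{ine_1}. The additional remarks on measurability and the degenerate case $\gamma+cT_d>1$ are fine but not needed beyond what the paper records.
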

\begin{proof}
Since $B(x(k))$ is non-negative and $c$-martingale, we conclude that \eqref{eq_lem1} in Lemma \ref{lemma_1} holds. Now using \eqref{ine_1} and the fact that $X_1\subseteq \{x\in X \mid B(x) \geq 1 \}$, we have $\mathbb{P}\{x(k)\in X_1 \text{ for some } 0\leq k\leq T_d\mid x(0)=x_0\}$ $\leq\mathbb{P}\{ \sup_{0\leq k\leq T_d} B(x(k))\geq 1\mid x(0)=x_0 \}$ $\leq B(x_0)+cT_d$ $\leq \gamma+cT_d$. This concludes the proof.  \qed
\end{proof}

Theorem~\ref{barrier} enables us to formulate an optimization problem by minimizing the value of $\gamma$ and $c$ in order to find an upper bound for finite-horizon reachability that is as tight as possible. 

In the next section, we discuss how to translate LTL$_F$ verification problem into the computation of a collection of barrier certificates each satisfying inequalities of the form \eqref{ine_1}-\eqref{ine_3}.
Then we show in Section~\ref{BC_computation} how to use Theorem \ref{barrier} to provide a lower bound on the probability of satisfying LTL$_F$ specifications over finite time horizon.


\section{Decomposition into Sequential Reachability}
\label{runs}
Consider a DFA $\mathcal{A}_{\neg \varphi}=(Q,Q_0,\Pi,\delta,F)$ that accepts all finite words of length $n\in[0,N]\subset\N_0$ over $\Pi$ that satisfy $\neg \varphi$.
Self-loops in the DFA play a central role in our decomposition.
Let $Q_s\subseteq Q$ be a set of states of $\mathcal{A}_{\neg \varphi}$ having self-loops, \textit{i.e.}, $Q_s:= \{q\in Q \,|\, \exists p\in \Pi, q\overset{p}{\longrightarrow} q\}$.

\noindent\textbf{Accepting state run of $\mathcal{A}_{\neg \varphi}$.} Sequence $\textbf{q}=(q_0,q_1,\ldots,q_n)\in Q^{n+1}$ is called an accepting state run if $q_0\in Q_0$, $q_n\in F$, and there exist a finite word $\sigma = (\sigma_0,\sigma_1,\ldots,\sigma_{n-1})\in\Pi^n$ such that $q_i \overset{\sigma_i}{\longrightarrow} q_{i+1}$ for all $i\in\{0,1,\ldots, n-1\}$.
We denote the set of such finite words by $\sigma(\textbf{q})\subseteq \Pi^n$ and the set of accepting runs by $\mathcal R$. We also indicate the length of $\textbf{q}\in Q^{n+1}$ by $|\textbf{q}|$, which is $n+1$.

Let $\mathcal{R}_{\leq N+1}$ be the set of all finite accepting state runs of lengths less than or equal to $N+1$ excluding self-loops,
\begin{equation}
\label{eq:runs}
\mathcal{R}_{\leq N+1} := \{\textbf{q} = (q_0,q_1,\ldots,q_n)\in\mathcal R \,|\, n\le N,\, q_i\neq q_{i+1},\, \forall i<n\}.
\end{equation}
Computation of $\mathcal{R}_{\leq N+1}$ can be done efficiently using algorithms in graph theory by viewing $\mathcal{A}_{\neg \varphi}$ as a directed graph.
%
%
%
%
Consider $\mathcal{G}=(\mathcal{V},\mathcal{E})$ as a directed graph with vertices $\mathcal{V}=Q$ and edges $\mathcal{E}\subseteq\mathcal{V}\times\mathcal{V}$ such that $(q,q')\in\mathcal{E}$ if and only if $q'\neq q$ and there exist $p\in\Pi$ such that $q\overset{p}{\longrightarrow} q'$. From the construction of the graph, it is obvious that the finite path in the graph of length $n+1$ starting from vertices $q_0\in Q_0$ and ending at $q_F\in F$ is an accepting state run $\textbf{q}$ of $\mathcal{A}_{\neg \varphi}$ without any self-loop thus belongs to $\mathcal{R}_{\leq N+1}$.
Then one can easily compute $\mathcal{R}_{\leq N+1}$ using variants of depth first search algorithm \cite{russell2003artificial}.

Decomposition into sequential reachability is performed as follows.
For any $\textbf{q} = (q_0,q_1,\ldots,q_n)\in\mathcal{R}_{\leq N+1}$, we define $\mathcal{P}(\textbf{q})$ as a set of all state runs of length $3$ augmented with a horizon,
\begin{equation}
\label{eq:reachability}
\mathcal{P}(\textbf{q}):=\{\left(q_i,q_{i+1},q_{i+2},T(\textbf{q},q_{i+1})\right)\mid 0\leq i\leq n-2\},
\end{equation}
where the horizon is defined as $T(\textbf{q},q_{i+1}) = N+2-|\textbf{q}|$ for $q_{i+1}\in Q_s$ and $1$ otherwise.
\begin{remark}
Note that $\mathcal{P}(\textbf{q})=\emptyset$ for $|\textbf{q}|=2$. In fact, any accepting state run of length $2$ specifies a subset of the state space such that the system satisfies $\neg\varphi$ whenever it starts from that subset. This gives trivial zero probability for satisfying the specification, thus neglected in the sequel. 
\end{remark}


The computation of sets $\mathcal{P}(\textbf{q})$, $\textbf{q}\in \mathcal R_{\le N+1}$, is illustrated in Algorithm \ref{algo1} and demonstrated below for our demo example.

\begin{algorithm}[t]
	\caption{Computation of sets $\mathcal{P}(\textbf{q})$, $\textbf{q}\in \mathcal R_{\le N+1}$}
	\label{algo1}
	\begin{algorithmic}[1]
		\Require{$ \mathcal{G} $, $ Q_s $, $N$}
		\State Compute set $\mathcal{R}_{\leq N+1}$ by depth first search on $ \mathcal{G} $
		\ForAll{$ \textbf{q}\in \mathcal{R}_{\leq N+1} $ and $|\textbf{q}|\geq3$}
		\For {$i=1$ to $|\textbf{q}|-3$}
		\State $\mathcal{P}_1(\textbf{q})\leftarrow \{(q_i,q_{i+1},q_{i+2})\}$
		\If {$q_{i+1}\in Q_s$}
		\State $\mathcal{P}(\textbf{q})\leftarrow \{(q_i,q_{i+1},q_{i+2},N+2-|\textbf{q}|)\}$
		\Else
		\State $\mathcal{P}(\textbf{q})\leftarrow \{(q_i,q_{i+1},q_{i+2},1)\}$
		\EndIf
		\EndFor
		\EndFor
		\Return $\mathcal{P}(\textbf{q})$
	\end{algorithmic}
\end{algorithm}

\def\example{\par\noindent{\bf Example 1.} \ignorespaces}
\def\endexaple{}

\begin{example}(continued)
For safe LTL$_F$ formula $\varphi$ given in \eqref{LTL_f}, Figure 1(b) shows a DFA $\mathcal{A}_{\neg\varphi}$ that accepts all words that satisfy $\neg\varphi$. From Figure 1(b), we get $Q_0=\{q_0\}$ and $F=\{q_3\}$. We consider traces of maximum length $N=5$.
The set of accepting state runs of lengths at most $N+1$ without self-loops is
 \begin{equation*}
 \mathcal{R}_{\leq 6}=\{(q_0,q_4,q_3),(q_0,q_1,q_2,q_3),(q_0,q_1,q_4,q_3),(q_0,q_3)\}.
 \end{equation*}
The set of states with self-loops is $Q_s = \{q_1,q_2,q_4\}$. Then the sets $\mathcal{P}(\textbf{q})$ for $\textbf{q}\in\mathcal{R}_{\leq 6}$ are as follows:
\begin{align*}
&\mathcal{P}(q_0,q_3)=\emptyset,\quad \mathcal{P}(q_0,q_4,q_3)=\{(q_0,q_4,q_3,4)\},\\
&\mathcal{P}(q_0,q_1,q_2,q_3)=\{(q_0,q_1,q_2,3),(q_1,q_2,q_3,3)\},\\
&\mathcal{P}(q_0,q_1,q_4,q_3)=\{(q_0,q_1,q_4,3),(q_1,q_4,q_3,3)\}.
\end{align*} 
%
For every $\textbf{q}\in\mathcal{R}_{\leq 6}$, the corresponding finite words $\sigma(\textbf{q})$ are listed as follows:
\begin{align*}
& \sigma(q_0,q_3)=\{p_1\vee p_3\},\quad \sigma(q_0,q_4,q_3)=\{(p_2,p_1)\},\\
& \sigma(q_0,q_1,q_2,q_3)=\{(p_0,p_1,p_2)\},\quad \sigma(q_0,q_1,q_4,q_3)=\{(p_0,p_2,p_1)\}.
\end{align*}
\end{example}\qed

\section{Computation of Probabilities Using Barrier Certificates}
\label{BC_computation}
Having the set of state runs of length $3$ augmented with horizon, in this section, we provide a systematic approach to compute a lower bound on the probability that the solution process of $S$ satisfies $\varphi$. Given DFA $\mathcal{A}_{\neg\varphi}$, our approach relies on performing a reachability computation over each element of $\mathcal P(\textbf{q})$, $\textbf{q}\in\mathcal R_{\le N+1}$, where reachability probability is upper bounded using barrier certificates. 

Next theorem provides an upper bound on the probability that the solution process of the system satisfies the specification $\neg\varphi$.
\begin{theorem}\label{upper_bound}
For a given safe LTL$_F$ specification $\varphi$, let $\mathcal{A}_{\neg\varphi}$ be a DFA corresponding to its negation, $\mathcal R_{\le N+1}$ be the set of accepting state runs of length at most $N+1$ as defined in \eqref{eq:runs}, and $\mathcal P$ be the set of runs of length $3$ augmented with horizon as defined in \eqref{eq:reachability}.  Then the probability that the system satisfies $\neg\varphi$ within time horizon $[0,N]\subseteq\mathbb{N}_0$ is upper bounded by
\begin{equation}
\label{eq:bound}
\mathbb{P}\{Trace_{N}(S)\models \neg \varphi \} \leq\hspace{-0.5em} \sum_{\textbf{q} \in\mathcal R_{\le N+1}}\hspace{-0.5em}\prod\left\{(\gamma_\nu+c_\nu T)\,|\, \nu = (q,q',q'',T)\in \mathcal{P}(\textbf{q})\right\},
\end{equation}
where $\gamma_\nu+c_\nu T$ is the upper bound on the probability of the trajectories of $S$ starting from $X_0 := L^{-1}(\sigma(q,q'))$ and reaching $ X_1:=L^{-1}(\sigma(q',q''))$ within time horizon $[0,T]\subseteq\mathbb{N}_0$ computed via Theorem~\ref{barrier}.
\end{theorem}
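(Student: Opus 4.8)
The plan is to connect acceptance of $\neg\varphi$ by the deterministic automaton $\mathcal{A}_{\neg\varphi}$ to a union of sequential reachability events, and then bound each such event by a product of the per-step reachability bounds furnished by Theorem~\ref{barrier}. First I would note that $\{Trace_{N}(S)\models\neg\varphi\}$ is exactly the event that the trace $\sigma_{\textbf{x}_N}$ generated by the solution process drives $\mathcal{A}_{\neg\varphi}$ from $Q_0$ into $F$ within the horizon. Because $\mathcal{A}_{\neg\varphi}$ is deterministic, each accepting trace induces a unique run, and collapsing its self-loops yields a unique element $\textbf{q}\in\mathcal R_{\le N+1}$. Writing $A_{\textbf q}$ for the event that this self-loop-free run equals $\textbf q$, the acceptance event is the disjoint union $\bigcup_{\textbf q\in\mathcal R_{\le N+1}}A_{\textbf q}$; by additivity it therefore suffices to show $\mathbb{P}(A_{\textbf q})\le\prod_{\nu\in\mathcal P(\textbf q)}(\gamma_\nu+c_\nu T)$ for each $\textbf q$, after which summing gives \eqref{eq:bound}.

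Next I would unwind $A_{\textbf q}$ for a fixed run $\textbf q=(q_0,\dots,q_n)$ into a chain of reachability tasks. If the reduced run is $\textbf q$, there are random transition times $0=s_0<s_1<\dots<s_{n-1}$ at which the label of $x(s_i)$ forces $q_i\to q_{i+1}$, i.e. $x(s_i)\in L^{-1}(\sigma(q_i,q_{i+1}))$, while between consecutive transitions the process dwells in the self-loop region of $q_{i+1}$. Consequently, for each $\nu=(q_i,q_{i+1},q_{i+2},T)\in\mathcal P(\textbf q)$ the process must travel from $X_0^\nu:=L^{-1}(\sigma(q_i,q_{i+1}))$ to $X_1^\nu:=L^{-1}(\sigma(q_{i+1},q_{i+2}))$ within the dwell $s_{i+1}-s_i$, and the assigned horizon $T=T(\textbf q,q_{i+1})$ upper bounds this dwell ($T=1$ when $q_{i+1}\notin Q_s$, forcing an immediate transition, and the conservative $N+2-|\textbf q|$ when $q_{i+1}\in Q_s$). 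Thus $A_{\textbf q}$ is contained in the event that every task in $\mathcal P(\textbf q)$ succeeds within its horizon, the end region $X_1^\nu$ of one task being the start region $X_0^{\nu'}$ of the next.

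I would then obtain the product bound by peeling off the tasks from the last back to the first using the Markov property of \eqref{sys}. Conditioning on $x(s_i)$ and applying the strong Markov property, the conditional probability of completing the remaining tasks factorizes; crucially, Theorem~\ref{barrier} bounds the probability of reaching $X_1^\nu$ from $X_0^\nu$ within $T$ by $\gamma_\nu+c_\nu T$ \emph{uniformly over the entry state} $x(s_i)\in X_0^\nu$, so each conditional factor is dominated by $\gamma_\nu+c_\nu T$ regardless of where in $X_0^\nu$ the previous task landed. Iterating this conditioning yields $\mathbb{P}(A_{\textbf q})\le\prod_{\nu\in\mathcal P(\textbf q)}(\gamma_\nu+c_\nu T)$, and since $w$ is i.i.d. the measurability and Markov properties needed are in force.

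I expect the main obstacle to be the factorization step when the transition times $s_i$ are random rather than deterministic. The clean remedy is to treat each $s_i$ as a stopping time of the natural filtration of the state process and invoke the strong Markov property, using the fact that the bound of Theorem~\ref{barrier} is valid for \emph{every} initial condition in $X_0^\nu$ so that the per-task factor pulls out of the conditional expectation uniformly. A secondary point needing care is the horizon bookkeeping: verifying that the realized dwell $s_{i+1}-s_i$ never exceeds $T(\textbf q,q_{i+1})$. Adopting the conservative value $N+2-|\textbf q|$ for self-looping states guarantees this containment and only enlarges each factor, which is harmless since we are after an upper bound.
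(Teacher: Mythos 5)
Your proposal is correct and follows essentially the same route as the paper's proof: decompose the event $\{Trace_N(S)\models\neg\varphi\}$ over the self-loop-free accepting runs $\textbf{q}\in\mathcal{R}_{\le N+1}$, bound the probability of each run by the product of the per-element reachability bounds $\gamma_\nu+c_\nu T$ obtained from Theorem~\ref{barrier}, and sum over runs. Your write-up is in fact more detailed than the paper's, which simply asserts the product factorization and the adequacy of the self-loop horizons $T(\textbf{q},q_{i+1})$, whereas you justify these via disjointness of the run events, stopping times with the Markov property, and the uniformity of the barrier bound over entry states.
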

\begin{proof}
Consider an accepting run $\textbf{q}\in \mathcal{R}_{\leq N+1}$ and set $\mathcal{P}(\textbf{q})$ as defined in \eqref{eq:reachability}. For an element $\nu=(q,q',q'',T)\in \mathcal{P}(\textbf{q})$, the upper bound on the probability of trajectories of $S$ stating from $L^{-1}(\sigma(q,q'))$ and reaching $L^{-1}(\sigma(q',q''))$ within time horizon $T$ is given by $\gamma_\nu+c_\nu T$. This follows from Theorem \ref{barrier}. Now the upper bound on the probability of the trace of the solution process reaching accepting state following trace corresponding to $\textbf{q}$ is given by the product of the probability bounds corresponding to all elements $\nu=(q,q',q'',T)\in \mathcal{P}(\textbf{q})$ and is given by
  \begin{equation}
\label{eq:bound_1}
\mathbb{P}\{ \sigma_{\textbf{x}_{N}}(\textbf{q})\models \neg \varphi \} \leq \prod \left\{(\gamma_\nu+c_\nu T)\,|\, \nu = (q,q',q'',T)\in \mathcal{P}(\textbf{q})\right\}.
\end{equation}
Note that, the way we computed time horizon $T_d$, we always get the upper bound for the probabilities for all possible combinations of self-loops for accepting state runs of length less than or equal to $N+1$.
The upper bound on the probability that the solution processes of system $S$ violate $\varphi$ can be computed by summing the probability bounds for all possible accepting runs as computed in \eqref{eq:bound_1} and is given by
\begin{equation}
\label{eq:bound_2}
\mathbb{P}\{Trace_{N}(S)\models \neg \varphi \} \leq \sum_{\textbf{q} \in\mathcal R_{\le N+1}}\prod\left\{(\gamma_\nu+c_\nu T)\,|\, \nu = (q,q',q'',T)\in \mathcal{P}(\textbf{q})\right\}.\nonumber
\end{equation} \qed
\end{proof}

Theorem~\ref{upper_bound} enables us to decompose the computation into a collection of sequential reachability, compute bounds on the reachability probabilities using Theorem~\ref{barrier}, and then combine the bounds in a sum-product expression.
\begin{remark}
In case we are unable to find barrier certificates for some of the elements $\nu \in \mathcal P(\textbf q)$ in \eqref{eq:bound}, we replace the related term $(\gamma_\nu+c_\nu T)$ by the pessimistic bound $1$. In order to get a non-trivial bound in \eqref{eq:bound}, at least one barrier certificate must be found for each $\textbf{q} \in\mathcal R_{\le N+1}$.
\end{remark}
\begin{corollary}
\label{lower_bound}
Given the result of Theorem \ref{upper_bound}, the probability that the trajectories of $S$ of length $N$ satisfies safe LTL$_F$ specification $\varphi$ is lower-bounded by
\begin{equation}
\mathbb{P}\{Trace_{N}(S)\models \varphi \}\geq1-\mathbb{P}\{Trace_{N}(S)\models \neg \varphi \}.\nonumber
\end{equation}
\end{corollary}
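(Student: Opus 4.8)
The plan is to derive the corollary directly from two ingredients: the logical complementarity of $\varphi$ and $\neg\varphi$ over finite traces, and the upper bound already furnished by Theorem~\ref{upper_bound}. The central fact I would establish is that, for any fixed length $N$, the traces satisfying $\varphi$ and those satisfying $\neg\varphi$ partition the set $\Pi^N$ of all length-$N$ words.

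First I would invoke the semantics of LTL$_F$ negation introduced in Section~\ref{sec:LTL}: by definition $\sigma,0\models\neg\varphi$ holds if and only if $\sigma,0\not\models\varphi$. Therefore, for every trace $\sigma\in\Pi^N$, exactly one of $\sigma\models\varphi$ or $\sigma\models\neg\varphi$ is true, so the languages $\mathcal{L}(\varphi)$ and $\mathcal{L}(\neg\varphi)$, restricted to length $N$, are disjoint and cover $\Pi^N$. Second, I would transport this partition to the solution process. Since the labeling function $L$ is measurable and the induced trace $\sigma_{\mathbf{x}_N}$ is a measurable function of the solution process, the two events $\{Trace_N(S)\models\varphi\}$ and $\{Trace_N(S)\models\neg\varphi\}$ are measurable and mutually complementary. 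Consequently their probabilities sum to one:
\begin{equation}
\mathbb{P}\{Trace_N(S)\models\varphi\}+\mathbb{P}\{Trace_N(S)\models\neg\varphi\}=1. \nonumber
\end{equation}
Rearranging this identity yields exactly the claimed lower bound, and substituting the explicit upper bound on $\mathbb{P}\{Trace_N(S)\models\neg\varphi\}$ from Theorem~\ref{upper_bound} produces a computable quantity $\vartheta$ as demanded by Problem~\ref{prob}.

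The only delicate point, and the step I expect to require the most care, is the first one: justifying that negation in safe LTL$_F$ genuinely acts as set-theoretic complement over finite traces of a fixed length. Some caution is warranted because the temporal operators (for instance $\medcirc$ and $\lozenge$) are end-of-trace sensitive, so complementarity must be read off directly from the pointwise satisfaction relation $\sigma,i\models\cdot$ rather than imported from infinite-trace LTL. Since the semantics in Section~\ref{sec:LTL} define $\sigma,i\models\neg\varphi$ precisely as the negation of $\sigma,i\models\varphi$, this complementarity holds verbatim at position $i=0$, and no argument beyond citing the semantics is needed; the remainder of the proof is a one-line rearrangement.
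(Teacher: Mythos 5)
Your proof is correct and takes the same (and essentially only) route as the paper, which states this corollary without any proof, treating it as immediate from the fact that the events $\{Trace_{N}(S)\models \varphi\}$ and $\{Trace_{N}(S)\models \neg \varphi\}$ are complementary --- precisely the complementarity you establish from the LTL$_F$ semantics of negation ($\sigma,i\models\neg\varphi$ iff $\sigma,i\not\models\varphi$) together with measurability of the trace map. Your extra caution about end-of-trace sensitivity is well placed but, as you correctly conclude, resolves itself by definition, and your argument in fact yields equality, of which the stated inequality is a weakening.
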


\subsection{Computation of barrier certificate}
Proving existence of a barrier certificate, finding one, or showing that a given function is in fact a barrier certificate are in general hard problems. But if we restrict the class of systems and labeling functions, we can construct computationally efficient techniques for searching barrier certificates of specific forms. One technique is to use sum-of-squares (SOS) optimization \cite{parrilo2003semidefinite}, which relies on the fact that a polynomial is non-negative if it can be written as sum of squares of different polynomials. Therefore, we raise the following assumption.
\begin{assumption}
\label{ass:BC}
System $S$ has state set $X\subseteq \mathbb R^n$ and its vector field $f:X\times V_{\textsf{w}}\rightarrow X$ is a polynomial function of state $x$ for any $w\in V_{\textsf{w}}$.
Partition sets $X_i = L^{-1}(p_i)$, $i\in\{0,1,2,\ldots, M\}$, are bounded semi-algebraic sets, \textit{i.e.}, they can be represented by polynomial equalities and inequalities.
\end{assumption}


Under Assumption~\ref{ass:BC}, we can formulate \eqref{ine_1}-\eqref{ine_3} as an SOS optimization problem to search for a polynomial-type barrier certificate $B(\cdot)$ and the tightest upper bound $(\gamma+c T_d)$. The following lemma provides a set of sufficient conditions for the existence of such a barrier certificate required in Theorem \ref{barrier}, which can be solved as an SOS optimization.
\begin{lemma}
\label{sos}
Suppose Assumption~\ref{ass:BC} holds and sets $X_0,X_1,X$ can be defined by vectors of polynomial inequalities
$X_0=\{x\in\R^n\mid g_0(x)\geq0\}$, $X_1=\{x\in\R^n\mid g_1(x)\geq0\}$, and $X=\{x\in\R^n\mid g(x)\geq0\}$, where the inequalities are defined element-wise.
Suppose there exists a sum-of-squares polynomial B(x), constants $\gamma\in [0,1]$ and $c\ge 0$, and vectors of sum-of-squares polynomials $\lambda_0(x)$, $\lambda_1(x)$, and $\lambda(x)$ of appropriate size such that following expressions are sum-of-squares polynomials
\begin{align}
-B(x)-\lambda_0^T(x) g_0(x)+\gamma\label{eq:sos1}\\
B(x)-\lambda_1^T(x) g_1(x)-1\\
-\mathbb{E}[B(f(x,w))|x]+B(x)-\lambda^T(x) g(x)+c.\label{eq:sos2}
\end{align}
Then  $B(x)$ satisfies conditions \eqref{ine_1}-\eqref{ine_3}.
\end{lemma}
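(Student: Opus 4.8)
The plan is to verify each of the three barrier-certificate conditions \eqref{ine_1}--\eqref{ine_3} separately, in each case exploiting the defining feature of a sum-of-squares (SOS) polynomial, namely that an SOS polynomial is non-negative on all of $\R^n$. The underlying mechanism is the standard S-procedure: multiplying a defining inequality $g_\bullet(x)\ge 0$ of a semi-algebraic set by an SOS multiplier $\lambda_\bullet(x)\ge 0$ produces a term that is non-negative on that set, so subtracting it from a target expression can only enlarge the region on which the desired inequality holds.

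First I would record the immediate observation that, since $B(x)$ is itself an SOS polynomial, $B(x)\ge 0$ for every $x\in X$; this supplies the non-negativity of $B$ demanded by Theorem~\ref{barrier}. Next, for condition \eqref{ine_1}, fix $x\in X_0$, so that $g_0(x)\ge 0$ holds element-wise. Because each entry of $\lambda_0(x)$ is SOS and hence non-negative, the inner product $\lambda_0^T(x)g_0(x)\ge 0$. Since expression \eqref{eq:sos1} is SOS it is non-negative, giving $0\le -B(x)-\lambda_0^T(x)g_0(x)+\gamma\le -B(x)+\gamma$, whence $B(x)\le\gamma$. The same reasoning applied to the second SOS expression on the set $X_1$ yields $0\le B(x)-\lambda_1^T(x)g_1(x)-1\le B(x)-1$, that is $B(x)\ge 1$, which is \eqref{ine_2}.

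For the c-martingale condition \eqref{ine_3} I would run the identical argument on $X$ using the multiplier $\lambda(x)$ and expression \eqref{eq:sos2}: for $x\in X$ we have $\lambda^T(x)g(x)\ge 0$, and non-negativity of \eqref{eq:sos2} then forces $\mathbb{E}[B(f(x,w))\mid x]\le B(x)+c$. Recalling that $x(k+1)=f(x(k),w(k))$, this is exactly $\mathbb{E}[B(x(k+1))\mid x(k)]\le B(x(k))+c$, so $B$ is a c-martingale on $X$, completing the three verifications.

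The step I expect to require the most care is this last one, where expression \eqref{eq:sos2} is tacitly assumed to be a polynomial and therefore amenable to an SOS test. This hinges on the hypothesis in Assumption~\ref{ass:BC} that $f$ is polynomial in $x$ for each fixed $w$: then $B(f(x,w))$ is a polynomial in $(x,w)$, and integrating over $w$ leaves a function that is polynomial in $x$ provided the moments of $w$ entering $B\circ f$ are finite (as they are, for instance, for the standard Gaussian noise of Example~1). Checking that this conditional expectation indeed collapses to a genuine polynomial in $x$ — so that the SOS relaxation is well posed — is the only non-routine ingredient; the remaining inequalities follow directly from the sign pattern of the SOS multipliers.
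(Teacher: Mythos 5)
Your proof is correct and follows exactly the standard S-procedure argument that the paper intends: the paper itself omits the proof, deferring to Lemma~7 of \cite{7364197}, and that proof is precisely your reasoning (non-negativity of the SOS expressions plus non-negativity of the multiplier terms $\lambda_\bullet^T(x)g_\bullet(x)$ on the respective sets yields \eqref{ine_1}--\eqref{ine_3}, with SOS-ness of $B$ giving the required non-negativity of the certificate). Your closing observation that $\mathbb{E}[B(f(x,w))\mid x]$ must collapse to a genuine polynomial in $x$ for \eqref{eq:sos2} to be a well-posed SOS constraint is a valid and worthwhile point that the paper leaves implicit.
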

 \begin{proof}
 The proof is similar to that of Lemma 7 in \cite{7364197} and is omitted due to lack of space.  \qed
 \end{proof}
 \begin{remark}
 Assumption~\ref{ass:BC} is essential for applying the results of Lemma~\ref{sos} to \emph{any} LTL$_F$ specification. For a given specification, we can relax this assumption and allow some of the partition sets $X_i$ to be unbounded. For this, we require that the labels corresponding to unbounded partition sets should only appear either on self-loops or on accepting runs of length less than 3. For instance, Example~1 has an unbounded partition set $X_3$ and its corresponding label $p_3$ satisfies this requirement (see Figure 1), thus the results are still applicable for verifying the specification.
 \end{remark}
 
 \subsection{Computational complexity}
 Based on Lemma \ref{sos}, a polynomial barrier certificate $B(\cdot)$ satisfying $\eqref{ine_1}$-$\eqref{ine_3}$ and minimizing constants $\gamma$ and $c$  can be automatically computed using SOSTOOLS \cite{1184594} in conjunction with a semidefinite programming solver such as SeDuMi \cite{sturm1999using}. We refer the interested reader to \cite{steinhardt2012finite} and  \cite{4287147} for more discussions. Note that the value of the upper bound of violating the property depends highly on the selection of degree of polynomials in Lemma \ref{sos}.
 
 From the construction of directed graph $\mathcal{G}=(\mathcal{V},\mathcal{E})$, explained in Section \ref{runs}, the number of triplets and hence the number of barrier certificates needed to be computed are bounded by $|\mathcal{V}|^3 = |Q|^3$, where $|\mathcal{V}|$ is the number of vertices in $\mathcal{G}$.
 Further, it is known \cite{baier2008principles} that $|Q|$ is at most $|\neg\varphi |2^{|\neg\varphi |}$, where $|\neg\varphi |$ is the length of formula $\neg\varphi$ in terms of number of operations, but in practice, it is much smaller than this bound \cite{klein2006experiments}. 
  
 Computational complexity of finding polynomials $B,\lambda_0,\lambda_1,\lambda$ in Lemma~\ref{sos} depends on both the degree of polynomials appearing in \eqref{eq:sos1}-\eqref{eq:sos2} and the number of variables. It is shown that for fixed degrees the required computations grow polynomially with respect to the dimension \cite{7364197}. Hence we expect that this technique is more scalable in comparison with the discretization-based approaches especially for large-scale systems.
 

\section{Case Studies}
\label{examples}
In this section, we demonstrate the effectiveness of the proposed results on several case studies. We first showcase the results on the running example, which has nonlinear dynamics with additive noise. We then apply the technique to a ten-dimensional linear system with additive noise to show its scalability. The third case study is a three-dimensional nonlinear system with multiplicative noise.
  
\subsection{Running example}
\label{running}
To compute an upper bound on reachability probabilities corresponding to each element of $\mathcal P(\textbf{q})$ in Theorem~\ref{upper_bound}, we use Lemma \ref{sos} to formulate it as a SOS optimization problem to minimize values of $\gamma$ and $c$ using bisection method.
The optimization problem is solved using SOSTOOLS and SeDuMi, to obtain upper bounds in Theorem~\ref{upper_bound}.
The computed upper bounds on probabilities corresponding to the elements of $\mathcal P(\cdot)$, $(q_0,q_4,q_3,4)$, $(q_0,q_1,q_2,3)$, $(q_1,q_2,q_3,3)$, $(q_0,q_1,q_4,3)$, and $(q_1,q_4,q_3,3)$ are respectively $0.00586$, $0.00232$, $0.00449$, $0.00391$, and $0.00488$.
Using Theorem \ref{upper_bound}, we get
\begin{equation}
\mathbb{P}\{Trace_{N}(S)\models \neg \varphi \}\le 0.00586+0.00232\times 0.00449+0.00391\times 0.00488= 0.00589.\nonumber
\end{equation}
Thus, a lower bound on the probability that trajectories of $S$ satisfy safe LTL$_F$ property \eqref{LTL_f} over time horizon $N=5$ is given by $0.99411$.
The optimization finds polynomials of degree $5$ for $B$, $\lambda$, $\lambda_0$, and $\lambda_1$. Hence $4$ barrier certificates are computed each with $245$ optimization coefficients, which takes $29$ minutes in total.
For the sake of comparison, we provide a probabilistic guarantee from Monte-Carlo method using $50000$ realizations, which results in the interval $\mathbb{P}\{Trace_{N}(S)\models \varphi \}\in [0.9959, \text{ }0.9979]$ with confidence $1-10^{-4}$.


 

\subsection{Thermal model of a ten-room building}
\label{thermal}
\begin{figure}[t]
			\centering
			\subfigure[]{\includegraphics[scale=0.3, height = 2.5cm]{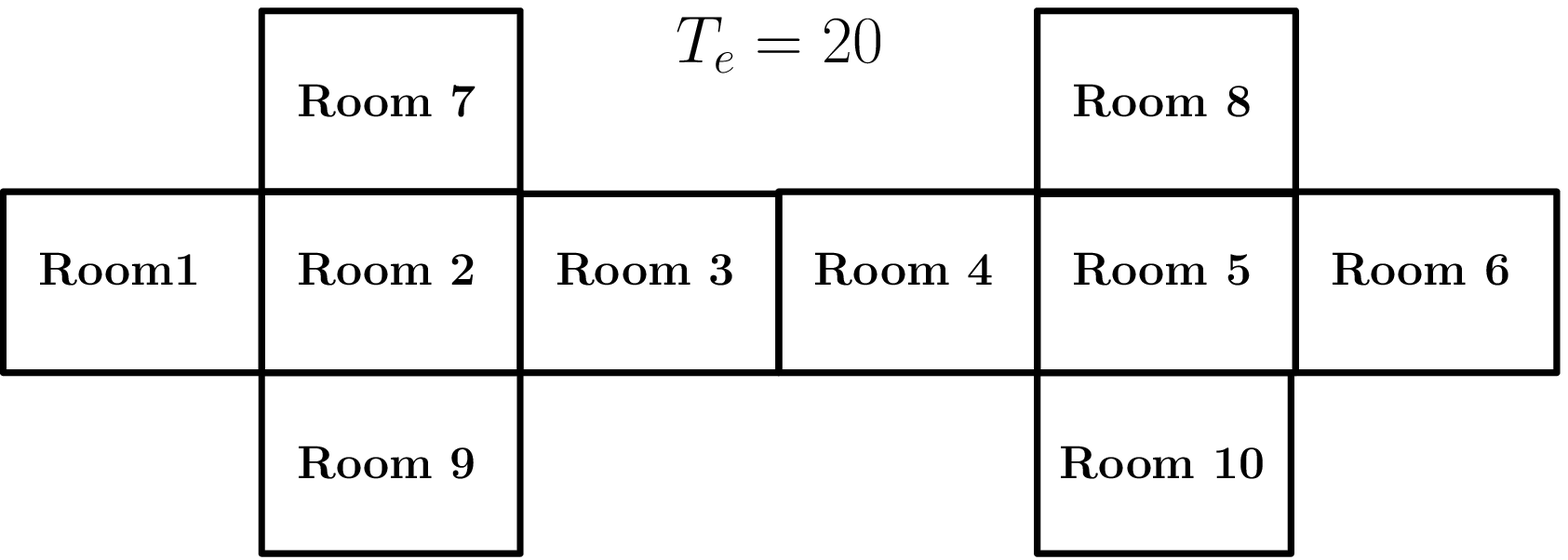}}  \hspace{1em}
			\subfigure[]{\includegraphics[scale=0.48, height = 2.8cm]{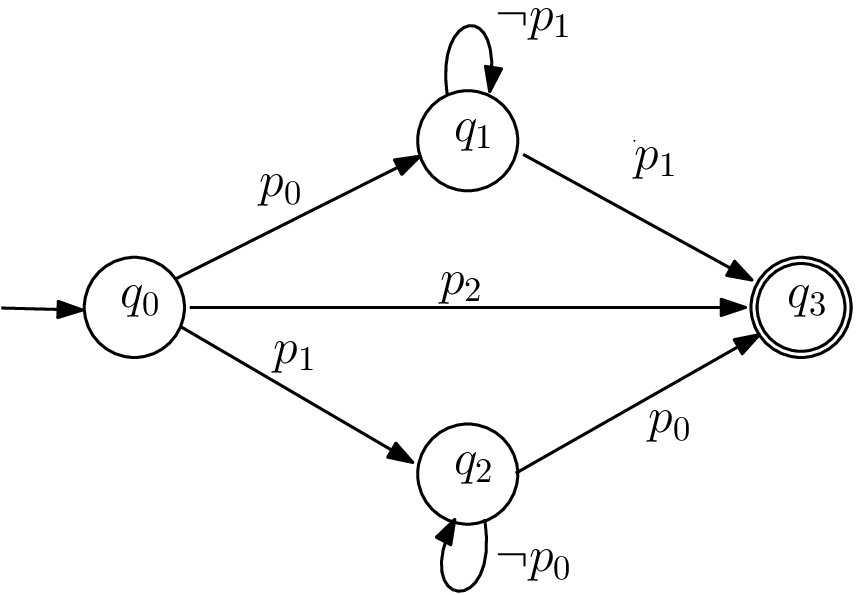}}\\
			\caption{(a) A schematic of ten-room building, (b) DFA $\mathcal{A}_{\neg \varphi}$ that accepts all traces satisfying $\neg \varphi$ where $\varphi$ is given in \eqref{ltl_f_2}.}
\end{figure}
Consider temperature evolution in a ten-room building shown schematically in Figure 2(a). We use this model to demonstrate the effectiveness of the results on large-dimensional state spaces. This model is adapted from \cite{jagtap2017quest} by discretizing it with sampling time $\tau_s=5$ minutes and without including heaters.
The dynamics of $S$ are given as follows:
\begin{align*}
	 x_1  (k+1) &= (1-\tau_s(\alpha+\alpha_{e1})) x_1(k)+\tau_s\alpha x_2(k)+\tau_s\alpha_{e1}T_e+0.5 w_1(k),\\
	 x_2  (k+1)  &= (1-\tau_s(4 \alpha+\alpha_{e2})) x_2(k)+\tau_s\alpha(x_1(k)+x_3(k)+x_7(k)+x_9(k))  \\& \ \ \ + \tau_s\alpha_{e2}T_e+ 0.5 w_2(k),\\
	 x_3  (k+1)  &= (1-\tau_s(2 \alpha+\alpha_{e1})) x_3(k)+\tau_s\alpha(x_2(k)+x_4(k))  \hspace{-.2em}+ \hspace{-.2em}\tau_s\alpha_{e1}T_e\hspace{-.2em}+\hspace{-.2em} 0.5 w_3(k),\\
	 x_4 (k+1)  &= (1-\tau_s(2 \alpha+\alpha_{e1})) x_4(k)+\tau_s\alpha(x_3(k)+x_5(k))  \hspace{-.2em}+ \hspace{-.2em}\tau_s\alpha_{e1}T_e\hspace{-.2em}+\hspace{-.2em} 0.5 w_4(k),\\
	 x_5  (k+1)  &= (1-\tau_s(4 \alpha+\alpha_{e2})) x_5(k)+\tau_s\alpha(x_4(k)+x_6(k)+x_8(k)+x_{10}(k))  \\& \ \ \ + \tau_s\alpha_{e2}T_e+ 0.5 w_5(k),\\
	x_6  (k+1)  &= (1-\tau_s(\alpha+\alpha_{e1})) x_6(k)+\tau_s\alpha x_5(k) + \tau_s\alpha_{e1}T_e+ 0.5 w_6(k),\\
	 x_7  (k+1)  &= (1-\tau_s(\alpha+\alpha_{e1})) x_7(k)+\tau_s\alpha x_2(k) + \tau_s\alpha_{e1}T_e+ 0.5 w_7(k),\\
	 x_8  (k+1)  &= (1-\tau_s(\alpha+\alpha_{e1})) x_8(k)+\tau_s\alpha x_5(k) + \tau_s\alpha_{e1}T_e+ 0.5 w_8(k),\\
	x_9  (k+1)  &= (1-\tau_s(\alpha+\alpha_{e1})) x_9(k)+\tau_s\alpha x_2(k) + \tau_s\alpha_{e1}T_e+ 0.5 w_9(k),\\
	 x_{10}  (k+1)  &= (1-\tau_s(\alpha+\alpha_{e1})) x_{10}(k)+\tau_s\alpha x_5(k) + \tau_s\alpha_{e1}T_e+ 0.5 w_{10}(k),
\end{align*} 
where
$x_i$, $i \in \{ 1, 2, \ldots,10\}$, denotes the temperature in each room,
$T_e=20^\circ$C is the ambient temperature, and $\alpha=5\times10^{-2}$, $\alpha_{e1}=5\times 10^{-3}$, and $\alpha_{e2}=8\times 10^{-3}$ are heat exchange coefficients.

Noise terms $w_i(k)$, $i\in\{1,2,\ldots,10\}$, are independent standard normal random variables.
The state space of the system is $X=\R^{10}$. We consider regions of interest $X_0 = [18,19.75]^{10}$, $X_1 = [20.25,22]^{10}$, $X_2 = X\setminus (X_0\cup X_1)$. 
The set of atomic propositions is given by $\Pi=\{p_0,p_1,p_2\}$ with labeling function $L(x_i) = p_i$ for all $x_i\in X_i$, $i\in\{0,1,2\}$.
The objective is to compute a lower bound on the probability that the solution process of length $N=50$ satisfies the safe LTL$_F$ formula
\begin{equation}\label{ltl_f_2}
\varphi= (p_0\wedge\square\neg p_1)\vee (p_1\wedge\square\neg p_0).
\end{equation}
The DFA $\mathcal{A}_{\neg \varphi}$ corresponding to $\neg \varphi$ is shown in Figure 2(b). We use Algorithm~\ref{algo1} to get
$\mathcal R_{\le 11} = \{(q_0,q_3),(q_0,q_1,q_3),(q_0,q_2,q_3)\}$,
$\mathcal{P}(q_0,q_1,q_3)=\{q_0,q_1,q_3,9\}$, and $\mathcal{P}(q_0,q_2,q_3)=\{q_0,q_2,q_3,9\}$. As described in Section \ref{BC_computation},
we compute two barrier certificates and SOS polynomials satisfying inequalities of Lemma \ref{sos}.
The lower bound $\mathbb{P}\{Trace_{N}(S)\models \varphi \}\ge 0.9820$ is obtained using SOSTOOLS and SeDuMi for initial states starting from $X_0\cup X_1$. 
The optimization finds $B$, $\lambda$, $\lambda_0$, and $\lambda_1$ as quadratic polynomials. Hence two barrier certificates are computed each with $255$ optimization coefficients, which takes $18$ minutes in total.
For the sake of comparison, we provide a probabilistic guarantee from Monte-Carlo method using $50000$ realizations, which results in the interval $\mathbb{P}\{Trace_{N}(S)\models \varphi \}\in [0.9984, \text{ }0.9997]$ with confidence $1-10^{-5}$.

\begin{figure}[t]
			\centering
			\includegraphics[scale=0.55]{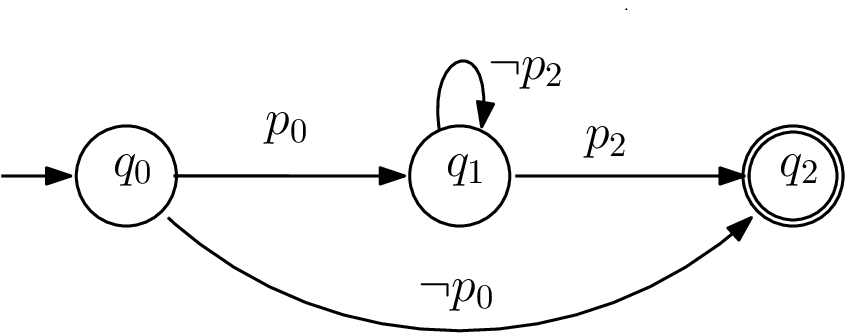}
			\caption{DFA $\mathcal{A}_{\neg \varphi}$ that accepts all traces satisfying $\neg \varphi$ where $\varphi=p_0\wedge\square\neg p_2$.}
			\label{DFA_3}
\end{figure}
\subsection{Lorenz model of a thermal convection loop}
Our third case study is the Lorenz model of a thermal convection loop as used in \cite{7799318} with multiplicative noise. The nonlinear dynamics of $S$ is given as 
\begin{align}
x_1(k+1)&=(1-aT)x_1(k)+aTx_2(k)+0.025x_1(k)w_1(k),\nonumber\\
x_2(k+1)&=(1-T)x_2(k)-Tx_2(k)x_3(k)+0.025x_2(k)w_2(k),\nonumber\\
x_3(k+1)&=(1+bT)x_3(k)+Tx_1(k)x_2(k)+0.025x_3(k)w_3(k),\label{eq:Lorenz}
\end{align}
where $a=10$, $b=8/3$, and $T=0.01$. Noise terms $w_1(k)$, $w_2(k)$, and $w_3(k)$ are independent standard normal random variables.
We refer the interested readers to \cite{Vincent1991} for a detailed treatment of the model. The state space of the system is $X=\R^3$. We define regions of interest as $X_0 =[-10,10]^2\times [2,10]$, $X_1 = [-10,10]^2\times [-2,2]$, $X_2 = [-10,10]^2\times [-10,-2]$, and $X_3 = X \setminus (X_0\cup X_1\cup X_2)$.

The set of atomic propositions is given by $\Pi=\{p_0,p_1,p_2,p_3\}$ with labeling function $L(x_i) = p_i$ for all $x_i\in X_i$, $i\in\{0,1,2,3\}$.
We consider safe LTL$_F$ property $\varphi=p_0\wedge\square\neg p_2$ and time horizon $N=10$.
The DFA $\mathcal{A}_{\neg \varphi}$ corresponding to the negation of $\varphi$ is shown in Figure \ref{DFA_3}. 
One can readily see that $\mathcal R_{\le 11} = \{(q_0,q_1,q_2)\}$ with $\mathcal P(q_0,q_1,q_2) = (q_0,q_1,q_2,9)$. Thus, we need to compute only one barrier certificate. 
We use inequalities of Lemma \ref{sos} and find a barrier certificate that gives the lower bound
$\mathbb{P}\{Trace_{N}(S)\models \varphi \}\ge 0.9859$.
The optimization finds $B$, $\lambda$, $\lambda_0$, and $\lambda_1$ as polynomials of degree $4$. Hence only one barrier certificate is computed with $53$ optimization coefficients, which takes $3$ minutes.
For the sake of comparison, Monte-Carlo method results in the interval $[0.9912, \text{ }0.9972]$ for the true probability with confidence $1-10^{-4}$ using $10000$ realizations.


Remark that current implementations of discretization-based approaches (e.g., \cite{Faust}) are not directly applicable to the model in Subsection~\ref{running} and to the model~\eqref{eq:Lorenz} due to the multiplicative noise in the latter and unbounded state space of the former. Application of these techniques to the model in Subsection~\ref{thermal} will also be computationally much more expensive than our approach due to its exponential complexity as a function of state space dimension.


\section{Conclusions}
%
In this paper, we proposed a discretization-free approach for formal verification of discrete-time stochastic systems. The approach computes lower bounds on the probability of satisfying a specification encoded as safe LTL over finite traces. It is based on computation of barrier certificates and uses sum-of-squares optimization to find such bounds.
From the implementation perspective, we plan to generalize our code and make it publicly available so that it can be applied to systems and specifications defined by users.

\bibliographystyle{splncs03}

\bibliography{IEEEtran1}

\end{document}